\documentclass[a4paper,UKenglish]{lipics-v2018-arxiv}

\usepackage{microtype}

\usepackage{tikz}


\theoremstyle{plain}
\newtheorem{proposition}[theorem]{Proposition}

\newcommand{\pstart}{\operatorname{start}}
\newcommand{\pend}{\operatorname{end}}
\newcommand{\istart}{\operatorname{l}}
\newcommand{\iend}{\operatorname{r}}
\newcommand{\greedyp}{\mathbf{GP}}

\newcommand{\rev}{\operatorname{rev}}

\bibliographystyle{plainurl}

\title{On Streaming Algorithms for the Steiner Cycle and Path Cover Problem on
Interval Graphs and Falling Platforms in Video Games}

\titlerunning{Steiner Cycle and Path Cover on Interval Graphs and Falling
Platforms}

\author{Ante Ćustić}{Department of Mathematics, Simon Fraser
    University\\{[250-13450 102nd Avenue, Surrey, BC Canada V3T 0A3]}}{acustic@sfu.ca}{0000-0002-4616-2932}{}

\author{Stefan Lendl}{Institute of Discrete Mathematics, Graz University of
    Technology\\{[Steyrergasse 30, 8010 Graz, Austria]}}{lendl@math.tugraz.at}{0000-0002-5660-5397}{}

\authorrunning{A. Ćustić and S. Lendl}

\Copyright{Ante Ćustić and Stefan Lendl}

\subjclass{F.2.2 Nonnumerical Algorithms and Problems, G.2.2 Graph Theory, I.2.8 Problem Solving, Control Methods, and Search}

\keywords{interval graphs, Steiner cycle, hamiltonian cycle, streaming algorithms,
video games}

\category{}

\relatedversion{}

\supplement{}

\funding{The authors acknowledge the support of the Austrian Science Fund (FWF): W1230.}

\acknowledgements{}

\EventEditors{John Q. Open and Joan R. Access}
\EventNoEds{2}
\EventLongTitle{42nd Conference on Very Important Topics (CVIT 2016)}
\EventShortTitle{arXiv}
\EventAcronym{CVIT}
\EventYear{2016}
\EventDate{December 24--27, 2016}
\EventLocation{Little Whinging, United Kingdom}
\EventLogo{}
\SeriesVolume{42}
\ArticleNo{}
\nolinenumbers 
\hideLIPIcs  

\begin{document}

\maketitle

\begin{abstract}
We introduce a simplified model for platform game levels with falling
platforms based on interval graphs and show that solvability of such levels
corresponds to finding Steiner cycles or Steiner paths in the corresponding
graphs. Linear time algorithms are obtained for both of these problems.
We also study these algorithms as streaming algorithms and analyze the necessary
memory with respect to the maximum number of intervals contained in another interval. This
corresponds to understanding which parts of a level have to be visible at each
point to allow the player to make optimal deterministic decisions.
\end{abstract}

\section{Introduction}

In 2D platform games it is a common game mechanism to include platforms that
fall or break
after the player visits them once. Additionally, it is often the case that the
player has to collect certain items (coins, stars, \dots) that are placed on
some of these platforms and afterwards get back to the start or reach the exit of the level.
Popular examples of video games that are (partially) based on these principles include
Super Mario Bros., Donkey Kong Country and Super Mario Land\footnote{Super Mario
    Bros., Donkey Kong Country and Super Mario land are a trademarks of Nintendo. Sprites
are used here under Fair Use for educational purposes.} (see
Figure~\ref{fig:gameex}).
We study solvability of levels based on these  principles by introducing a toy model
of such video games, in which all platforms (except for the target/starting point) have this falling
property. The reachability between two platforms is modeled via an interval
graph, which in many cases is a reasonable simplification.
Then, the solvability of a level boils down to either finding a \emph{Steiner cycle}
or a \emph{Steiner path} in the corresponding interval graph.
To our knowledge, these problems have not been studied for this specific graph
class. The \emph{Hamiltonian cycle} and \emph{Hamiltonian
path} problem, which are special cases of the Steiner variants, are extensively studied for interval graphs and can be solved in linear
time, if the intervals are given as a right endpoint sorted list~\cite{keil1985finding,
arikati1990linear,manacher1990optimum}.

\begin{figure}
	\centering
    \begin{subfigure}[b]{0.46\textwidth}
        \includegraphics[width=\textwidth]{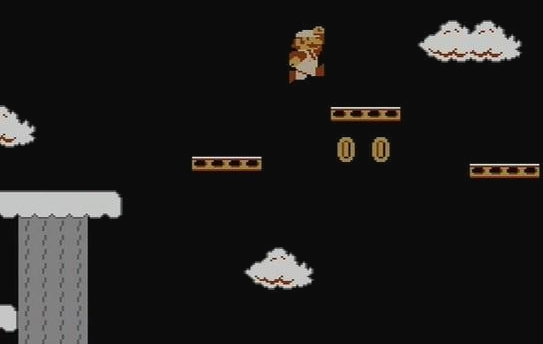}
        \caption{Super Mario Bros. (1985, NES)}
        \label{fig:mario}
    \end{subfigure}
    \qquad
    \begin{subfigure}[b]{0.46\textwidth}
        \includegraphics[width=\textwidth]{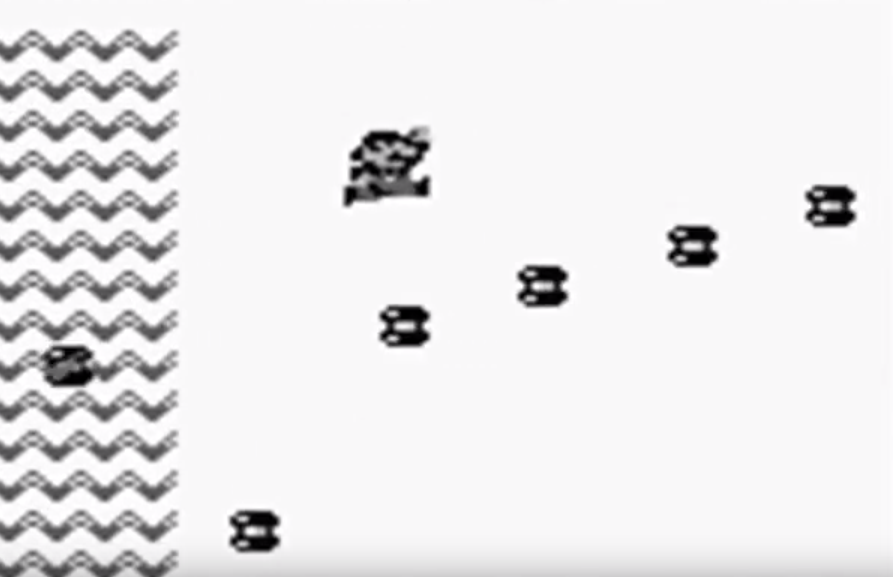}
        \caption{Super Maro Land (1989, Game Boy)}
    \end{subfigure}

    \begin{subfigure}[b]{0.46\textwidth}
        \includegraphics[width=\textwidth]{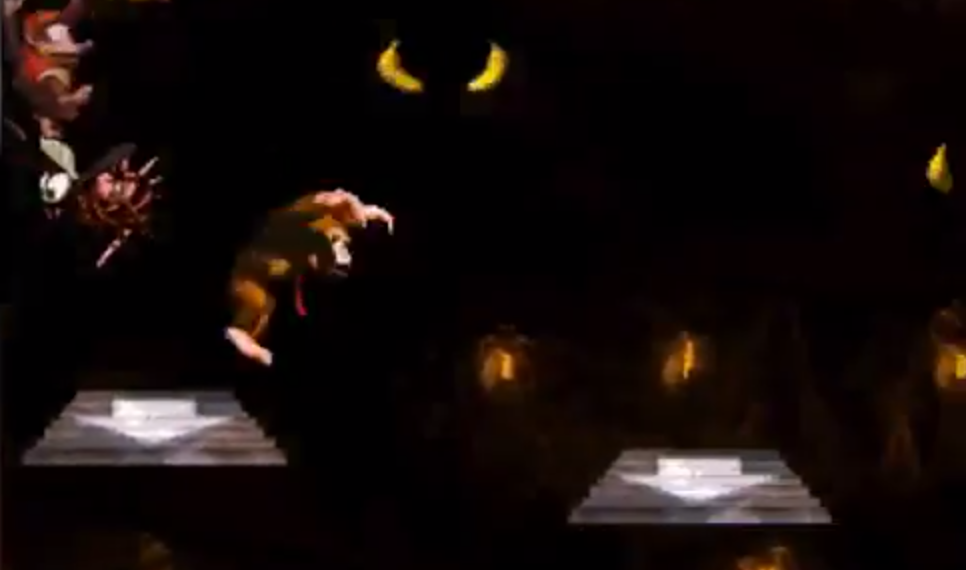}
        \caption{Donkey Kong Country (1994, SNES)}
    \end{subfigure}
    \caption{Examples of levels from popular platform games including falling
    platforms.}
    \label{fig:gameex}
\end{figure}

In this work we generalize the algorithms of
Manacher~et~al.~\cite{manacher1990optimum} to the \emph{Steiner setting} and
obtain first linear time algorithms for the \emph{Steiner path cover} and
\emph{Steiner cycle} problem on interval graphs.
A second important aspect when considering 2D game levels is the fact that the
screen size is limited, so the whole level is not visible to the player at once.
By studying our algorithms as \emph{single pass streaming algorithms} we 
state precisely which parts of a level have to be visible to the player to
deterministically decide how to play at each time. Alternatively, this
can be interpreted as a memory bound for the streaming algorithms in terms of a
natural graph parameter for interval graphs.

For a more general model for platform game levels based on intersection graphs of
two dimensional boxes these problems are known to be NP-hard. Such graphs are
generalizations of grid graphs for which already the Hamiltonian path problem is
known to be NP-hard~\cite{itai1982hamilton}.

\section{Definitions and Preliminary Results}

Given an interval $i = [x, y]$ we denote the starting point $x$ by $\istart(i) = x$ and
the endpoint $y$ by $\iend(i) = y$. Let $I = (i_{1}, i_{2}, \dots, i_{n})$ be
a list or set of intervals. We denote by $G(I)$ the interval graph of $I$. The
vertices of this graph correspond to the intervals of $I$. Two intervals $i, i'
\in I$ are connected by an edge in $G(I)$ if $i \cap i' \neq \emptyset$.

For an arbitrary graph $G = (V, E)$ a list of vertices $P = (i_{1}, i_{2},
\dots, i_{l})$ is a (simple) path if those vertices are pairwise distinct and for
each $j=1,2,\dots,l-1$ it holds that $\{i_{j}, i_{j+1}\} \in E$. The start of
$P$ is denoted by $\pstart(P) = i_{1}$ and the end of $P$ is denoted by
$\pend(P) = i_{l}$. We define $\rev(P)$ as the reverse path $(i_{l}, i_{l-1},
\dots, i_{1})$ of $P$. If in addition
$\{i_{l}, i_{1}\} \in E$ we call $P$ a (simple) cycle. For ease of writing we
sometimes abuse notation and consider $P$ as a set instead of a list, to allow
for the use of set operations. Given two paths $P$ and $Q$ and a vertex $i$ we also write $(P,Q)$
for the concatenation of $P$ and $Q$ and $(P,i)$ for the concatenation of $P$
and $i$.
Given a set $S \subseteq V$, a \emph{Steiner cycle} is a
cycle $C$ in $G$ such that $S \subseteq C$. A \emph{Steiner path cover}
of $G$ is a set $\{P_{1}, P_{2}, \dots, P_{k}\}$ of paths in $G$ such that $S
\subseteq \bigcup_{j=1}^{k} P_{j}$. The \emph{Steiner path cover number} $\pi_{S}(G)$ is the
the minimum cardinality of a Steiner path cover. If $\pi_{S}(G) = 1$ we say that
$G$ has a \emph{Steiner path}. A set $C \subseteq V$ is called a \emph{cutset}
of $G$ if $G-C$ is disconnected. A set of vertices $T \subseteq V$ is called an
island with respect to $C$, if $T$ is not adjacent to any vertex in $V\setminus(C
\cup T)$. $T$ is called an $S$-island with respect to $C$, if $T$ is an island
with respect to $C$ and $S \cap T \neq \emptyset$.

The following two results are generalizations of two  observations by
Hung~and~Chung~\cite{hung2011linear}, easily verified by the pigeonhole principle.

\begin{proposition}\label{proposition:pathcover}
    Let $C$ be a cutset of $G$ and $g_{S}$ the number of connected components $K$ in
    $G-C$ such that $K \cap S \neq \emptyset$. Then, $\pi_{S}(G) \geq g_{S} -
    |C|$.
\end{proposition}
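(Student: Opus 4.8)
The plan is to argue directly from a minimum Steiner path cover, bounding how many of the ``relevant'' components a single path can touch and then summing this bound over all paths; the whole estimate is a counting argument in the spirit of the pigeonhole principle. First I would fix a Steiner path cover $\mathcal{P} = \{P_{1}, \dots, P_{k}\}$ of minimum cardinality $k = \pi_{S}(G)$, and write $K_{1}, \dots, K_{g_{S}}$ for the connected components of $G - C$ that meet $S$. Since $S \subseteq \bigcup_{j} P_{j}$, every $K_{i}$ contains a vertex of $S$ that lies on at least one of the paths, so each $K_{i}$ is visited by at least one $P_{j}$.

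The heart of the argument is a per-path bound: a single path $P_{j}$ can visit at most $c_{j} + 1$ components of $G - C$, where $c_{j} := |P_{j} \cap C|$ is the number of cutset vertices on $P_{j}$. To see this I would use that consecutive vertices of $P_{j}$ are adjacent in $G$, while two vertices lying in distinct components of $G - C$ are non-adjacent in $G - C$; hence any step of $P_{j}$ that moves between two different components must have at least one endpoint in $C$. Deleting the $c_{j}$ cutset vertices from $P_{j}$ therefore breaks it into at most $c_{j} + 1$ subpaths, each of which stays inside a single component of $G - C$. In particular $P_{j}$ meets at most $c_{j} + 1$ of the distinguished components $K_{i}$.

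Summing this over all $k$ paths and using that each $K_{i}$ is counted at least once yields $g_{S} \le \sum_{j=1}^{k} (c_{j} + 1) = k + \sum_{j=1}^{k} c_{j}$. The final ingredient is that the paths of a path cover are vertex-disjoint, so each vertex of $C$ lies on at most one $P_{j}$ and thus $\sum_{j=1}^{k} c_{j} \le |C|$. Combining the two estimates gives $g_{S} \le k + |C| = \pi_{S}(G) + |C|$, which rearranges to the claimed inequality $\pi_{S}(G) \ge g_{S} - |C|$.

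The step I expect to be the main obstacle is the per-path bound, i.e.\ verifying cleanly that a path can only change components by passing through the cutset and hence that $c_{j}+1$ segments suffice; everything after that is bookkeeping. The one convention I would want to pin down is that a Steiner path cover consists of vertex-disjoint paths (as is standard and as is implicit in the generalization of Hung and Chung), since the disjointness is exactly what makes $\sum_{j} c_{j} \le |C|$ hold.
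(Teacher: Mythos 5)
Your argument is correct and is exactly the pigeonhole-principle argument the paper alludes to (it states the proposition without writing out a proof, citing Hung and Chung): each path meets at most $|P_{j}\cap C|+1$ components of $G-C$, and summing over a disjoint path cover gives $g_{S}\le \pi_{S}(G)+|C|$. Your caveat about vertex-disjointness is well taken --- the bound genuinely fails for overlapping paths (e.g.\ a star with all leaves in $S$), so the paper's definition of a Steiner path cover must indeed be read as a set of pairwise vertex-disjoint paths.
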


\begin{proposition}\label{proposition:steinercycle}
    Let $C$ be a cutset of $G$ and $g_{S}$ the number of connected components $K$ in
    $G-C$ such that $K \cap S \neq \emptyset$. If $g_{S} > |C|$, then $G$ has no
    Steiner cycle.
\end{proposition}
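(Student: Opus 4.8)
The plan is to argue by contradiction using the pigeonhole principle, in direct analogy with Proposition~\ref{proposition:pathcover}. Suppose, contrary to the claim, that $G$ admits a Steiner cycle, which I will denote $\mathcal{Z}$ to avoid clashing with the cutset $C$; thus $\mathcal{Z}$ is a cycle with $S \subseteq \mathcal{Z}$. The idea is that deleting the cutset cannot ``cut'' the single cyclic object $\mathcal{Z}$ into more pieces than there are deleted vertices, yet each of the $g_S$ relevant components must be reached by a separate piece. Making this precise and counting the pieces is the whole content of the argument.

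First I would restrict attention to $C' = C \cap \mathcal{Z}$, noting $|C'| \le |C|$, and consider the cyclic sequence $\mathcal{Z}$ with the vertices of $C'$ removed. This splits $\mathcal{Z}$ into a collection of maximal arcs (consecutive runs of vertices of $\mathcal{Z}$ avoiding $C'$), and the key combinatorial fact is that removing $j = |C'|$ vertices from a cycle yields at most $\max(j,1)$ such arcs. Next I would observe that each arc is a path in $G - C$: consecutive vertices on an arc are adjacent in $G$ and none of them lies in $C$, so an entire arc is contained in a single connected component of $G - C$.

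Then comes the pigeonhole step. For each of the $g_S$ components $K$ with $K \cap S \neq \emptyset$, fix a vertex $s_K \in K \cap S$. Since $\mathcal{Z}$ is a Steiner cycle we have $s_K \in \mathcal{Z}$, and since $s_K \in K \subseteq G - C$ we have $s_K \notin C'$, so $s_K$ lies on one of the arcs; that arc is contained in the component $K$. Distinct components receive distinct arcs, so the number of arcs is at least $g_S$. Combining this with the previous bound gives $g_S \le \max(|C'|,1) \le \max(|C|,1) = |C|$, where the last equality uses that a cutset is nonempty. This contradicts the hypothesis $g_S > |C|$, so no Steiner cycle exists.

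The steps are all elementary, so I do not expect a genuine obstacle; the one point requiring care is the arc-counting together with the degenerate case $C' = \emptyset$, in which the whole of $\mathcal{Z}$ forms one arc lying in a single component, forcing $g_S \le 1 \le |C|$ and again contradicting the hypothesis. I would state the arc bound explicitly so that both the generic case and this boundary case are covered uniformly.
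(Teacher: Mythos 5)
Your argument is correct and is exactly the pigeonhole argument the paper intends (the paper states this proposition without proof, remarking only that it is easily verified by the pigeonhole principle): the at most $|C|$ cycle vertices lying in $C$ split the cycle into at most $\max(|C|,1)$ arcs, each arc is confined to a single component of $G-C$, and each of the $g_{S}$ marked components must contain at least one arc. The one point worth noting is your final step $\max(|C|,1)=|C|$, which requires $C\neq\emptyset$; under the paper's literal definition an empty set is a cutset of any disconnected graph, and then the proposition itself fails for $g_{S}=1$ (e.g.\ a triangle plus an isolated vertex with $S$ contained in the triangle), so the nonemptiness convention you invoke is actually needed for the statement to hold.
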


 We use these results 
to solve the \emph{Steiner path cover} problem
(see Section~\ref{sec:algpathcover}) and the \emph{Steiner cycle} problem
(see Section~\ref{sec:algsteinercyc}) on interval graphs efficiently.
In the whole paper we assume that $|S|$ is known to the algorithms and queries $i
\in S$ can be performed in $O(1)$ time.

\section{The Steiner Path Cover Problem}\label{sec:algpathcover}

We show that the basic greedy principle, that is the core of efficient
algorithms for the \emph{path cover} problem on interval graphs,
can be generalized by the introduction of neglectable intervals.
The basic greedy principle to find paths in interval graphs was introduced independently by
Manacher~et.~al~\cite{manacher1990optimum} and
Arikati~et~al.~\cite{arikati1990linear}.

Given a right endpoint sorted list of
interval $i_{1}, i_{2}, \dots, i_{n}$ the algorithm iteratively constructs a
path $P$. It starts with the path $P := (i_{1})$ containing only the first
interval. Then it repetitively extends $P$ by
the neighbor of $\pend(P)$ not in $P$ with minimum right endpoint. If no such
extension is possible the algorithm terminates with the current path $P$ as an
output. We denote this algorithm by $\greedyp$ and the path $P$ obtained by this
algorithm by $\greedyp(I)$.

For a path $P = \greedyp(I) = (i_{1}, i_{2}, \dots, i_{l})$ obtained by the
algorithm if executed on an interval graph $G(I)$, we define 
$L(P)$, the set of intervals that exceed beyond the right endpoint of the end of
$P$, i.e. $L(P) = \{ i \in P \colon \iend(i) > \iend(\pend(P)) \}$. Based on
this we recursively define $C(P)$, the set of covers of the path $P$. If $L(P) =
\emptyset$, we also set $C(P) = \emptyset$. Otherwise, let $j$ be the maximum
index such that $i_{j} \in L(P)$. We set $C(P) = \{i_{j}\} \cup C(P')$ for $P' =
(i_{1}, i_{2}, \dots, i_{j-1})$.

For  $C(P) = \{c_{1}, c_{2}, \dots c_{k}
\}$ and $P = (P_{0}, c_{1}, P_{1}, c_{2}, \dots, c_{k}, P_{k})$ Manacher~et~al.~\cite{manacher1990optimum} proved that for
each $j=0,1,\dots, k$ it holds that $P_{j}$ is an island with respect to $C(P)$
and if $I\setminus P \neq \emptyset$ also $I \setminus P$ is an island with
respect to $C(P)$. We call such a decomposition of $P$ a
\emph{decomposition into covers and islands}. 

Manacher~et~al.~\cite{manacher1990optimum} also observed the following important
properties of a decomposition into covers and islands.

\begin{proposition}\label{proposition:covislanddecomp-props}
    Let $P = \greedyp(I) = (i_{1}, i_{2}, \dots, i_{l})$.
    \begin{enumerate}
        \item If $i_{j} \in C(P)$ it holds that $\iend(i_{j}) >
            \iend(i_{j+1})$.
        \item If $P = (P_{0}, c_{1}, P_{1}, c_{2}, \dots, c_{k}, P_{k})$ is a
            decomposition into covers and islands it holds that $L(P_{j}) =
            \emptyset$ for each $j=0,1,\dots,k$.
    \end{enumerate}
\end{proposition}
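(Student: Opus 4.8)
The plan is to prove both statements together by induction on the number of covers $k = |C(P)|$, following the recursion that defines $C(P)$: it strips off the rightmost cover together with the last island and leaves a shorter prefix $P'$, whose covers and islands are exactly those of $P$ lying to the left of that cover. Neither statement uses the greedy origin of $P$; both are properties of the recursively defined sets $L(\cdot)$ and $C(\cdot)$, which make sense for any list of intervals, so the induction hypothesis may be applied to $P'$ even though $P'$ need not itself be a greedy path. In the base case $k = 0$ we have $C(P) = \emptyset$, which by the definition forces $L(P) = \emptyset$; the first statement is then vacuous and the only island is $P_0 = P$, for which $L(P_0) = L(P) = \emptyset$.

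For the inductive step let $j$ be the maximum index with $i_j \in L(P)$, so that $c_k = i_j$ is the rightmost cover, $P_k = (i_{j+1}, \dots, i_l)$ is the last island, and $C(P) = \{i_j\} \cup C(P')$ with $P' = (i_1, \dots, i_{j-1})$. The single fact driving both statements is the maximality of $j$: since $\iend(i_l) = \iend(\pend(P))$ we have $i_l \notin L(P)$, hence $j \le l-1$, and for every index $j'$ with $j < j' \le l$ the interval $i_{j'}$ fails the defining inequality of $L(P)$, that is $\iend(i_{j'}) \le \iend(i_l)$, whereas $\iend(i_j) > \iend(i_l)$. Taking $j' = j+1$, which exists because $j \le l-1$, yields $\iend(i_{j+1}) \le \iend(i_l) < \iend(i_j)$, the first statement for the cover $c_k$; letting $j'$ range over all of $\{j+1, \dots, l\}$ shows that no interval of $P_k$ has right endpoint exceeding $\iend(i_l) = \iend(\pend(P_k))$, that is $L(P_k) = \emptyset$, the second statement for the island $P_k$.

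The remaining covers $c_1, \dots, c_{k-1}$ and islands $P_0, \dots, P_{k-1}$ constitute the decomposition into covers and islands of $P'$, which has $k-1$ covers, so the induction hypothesis gives both conclusions for them. The one point to verify --- and the only genuine, if modest, obstacle --- is that the mismatch between the order in which the recursion produces the covers and the left-to-right order in which the decomposition $(P_0, c_1, \dots, c_k, P_k)$ is written does not spoil the transfer from $P'$ to $P$. Concretely, $L(P_m)$ depends only on the sub-path $P_m$ itself, so $L(P_m) = \emptyset$ carries over verbatim; and every remaining cover has index at most $j-2$, because the rightmost cover of $P'$ already has index at most $(j-1)-1$, so its path-successor has index at most $j-1$ and hence still lies inside $P'$. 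Thus the right-endpoint inequality established for each such cover in $P'$ is literally the inequality required in $P$, and both statements follow.
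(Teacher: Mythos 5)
Your proof is correct. Note that the paper itself offers no argument for this proposition --- it is quoted as an observation of Manacher~et~al.~\cite{manacher1990optimum} --- so there is no in-paper proof to compare against; your induction on $k=|C(P)|$ supplies a self-contained justification. The two points that genuinely need care are both handled: you observe that the statements depend only on the recursive definitions of $L(\cdot)$ and $C(\cdot)$ and not on $P$ being a greedy path, which is what licenses applying the induction hypothesis to the prefix $P'=(i_1,\dots,i_{j-1})$; and you verify that every cover of $P'$ has index at most $j-2$ (since $\pend(P')=i_{j-1}$ can never lie in $L(P')$), so the successor inequality obtained inside $P'$ is literally the inequality required inside $P$. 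The core step --- maximality of $j$ forces $\iend(i_{j'})\le\iend(i_l)<\iend(i_j)$ for all $j<j'\le l$, which yields the cover inequality for $c_k$ and $L(P_k)=\emptyset$ simultaneously --- is exactly the right observation, and it also shows as a byproduct that every island is nonempty, which the paper's later Lemma~\ref{lemma:scovisland-decomp} implicitly relies on.
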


To illustrate the notions introduced above, consider the intervals in
Figure~\ref{figure:GP} given as a right endpoint-sorted list $I = (i_1,
i_2,\ldots, i_{12})$.  
\begin{figure}[h]
	\centering
    \begin{tikzpicture}[xscale=0.4,yscale=0.65]
        \draw (1,0) --  (4,0) node[align=left, above]{$i_{1}$\qquad};
        \draw (3,3) --  (10,3) node[align=left, above]{$i_{2}$\qquad};
        \draw (7,0) --  (11,0) node[align=left, above]{$i_{3}$\qquad};
        \draw (2,2) --  (12,2) node[align=left, above]{$i_{4}$\qquad};
        \draw (14,0) --  (16,0) node[align=left, above]{$i_{5}$\qquad};
        \draw (0,1) --  (19,1) node[align=left, above]{$i_{6}$\qquad};
        \draw (21,1) --  (25,1) node[align=left, above]{$i_{7}$\qquad};
        \draw (20.5,0) --  (27,0) node[align=left, above]{$i_{8}$\qquad};
        \draw (26.2,1) --  (29,1) node[align=left, above]{$i_{9}$\qquad};
        \draw (18,3) --  (30,3) node[align=left, above]{$i_{10}$\qquad};
        \draw (30.7,1) --  (32.5,1) node[align=left, above]{$i_{11}$\qquad};
        \draw (23,2) --  (33,2) node[align=left, above]{$i_{12}$\qquad};
    \end{tikzpicture}
	\caption{An interval model $I$ of twelve endpoint-sorted intervals \cite{chang1999deferred}.}  
	\label{figure:GP} 
\end{figure}
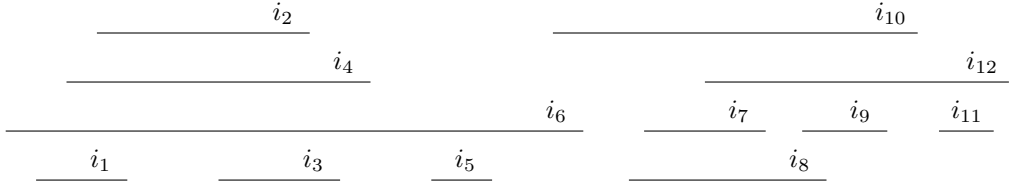
Algorithm $\greedyp$ starts by setting $P=(i_1)$. Neighbors of $i_1$ are $\{
i_2, i_4, i_6\}$, and since $\iend(i_2) < \min\{\iend(i_4), \iend(i_6)\}$ we
extend $P$ by $i_2$, i.e. $P = (i_1, i_2)$. Among neighbors of $i_2$ that are
not already in $P$, $i_3$ has the smallest right endpoint, so $P$ is extended to
$P=(i_1, i_2, i_3)$. Next candidates for the extension are $\{i_4, i_6 \}$ among
which we chose $i_4$, i.e. $P=(i_1, i_2, i_3, i_4)$. Next, the only possible
extension is by $i_6$, hence  $P=(i_1, i_2, i_3, i_4, i_6)$. Among the next candidates for extension $\{i_5, i_{10} \}$, interval $i_5$ is chosen. At this point the algorithm terminates and outputs $P = (i_1, i_2, i_3, i_4, i_6, i_5)$, since there is no neighbor of $i_5$ that is not already in $P$.

Now we find a decomposition into covers and islands of $P$. Since $\iend(i_6) >
\iend(\pend(P)= i_5)$, we have that $L(P) = \{i_6\}$, and  $C(P) = \{i_6\} \cup
C(P'=(i_1, i_2, i_3, i_4))$. $L(P')$ is the empty set, so the decomposition
process is over and we have that the decomposition into covers and islands of
$P$ is given by $C(P) = \{i_6\}$ and $P=(P_0, i_6, P_1)$, where $P_0= (i_1, i_2,
i_3, i_4)$ and $P_1= (i_5)$. Note that $P_0$, $P_1$ and $I \setminus P$ are
islands with respect to $C(P)= \{i_6\}$. Furthermore, note that our
decomposition satisfies the properties in
Proposition~\ref{proposition:covislanddecomp-props}.
\medskip

Given the fact that in the Steiner variant of the problem only the
intervals in $S$ have to be visited, we introduce \emph{neglectable intervals}.
Let $P$ be the current path at any point of the algorithm and $i'$ be the next
extension. We call $i'$ \emph{neglectable} with respect to $\pend(P)$, if $i'
\notin S$ and $\iend(i') <
\iend(\pend(P))$, i.e. $\pend(P) \in L((P, i'))$. We modify the algorithm
$\greedyp$, such that it skips neglectable
intervals with respect to the end of the current path. Analogously to $\greedyp$ this modification is denoted by
$\greedyp_{S}$. We define the set $N_{i}$ of intervals  that are not contained in
$\greedyp_{S}(I)$ since they are neglectable with respect to $i$ for some path $P$ during
the execution of $\greedyp_{S}$, where $i = \pend(P)$. We denote by $N$ the set of all
such neglectable intervals obtained during the entire run of $\greedyp_{S}$.

\begin{lemma}\label{lemma:scovisland-decomp}
    Let $P = \greedyp_{S}(I)$ be the path obtained by $\greedyp_{S}$ for a given list of
    intervals $I$ and $P = (P_{0}, c_{1}, P_{1}, c_{2}, \dots, P_{k-1}, c_{k}, P_{k})$ its
    decomposition into covers and islands in $G(I\setminus N)$. Let $C(P) = \{c_{1}, c_{2},
    \dots, c_{k}\}$, then it holds for all
    $j=0,1,\dots,k$ that $P_{j} \cap S \neq \emptyset$, i.e. $P_{j}$ is an
    $S$\nobreakdash-island with respect to $C(P)$ in $G(I\setminus N)$.
    It even holds that $P_{j} \cup N_{c_{j}}$ contains at least one $S$\nobreakdash-island with
    respect to $C(P)$ in $G(I)$.
\end{lemma}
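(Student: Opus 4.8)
The plan is to first reduce the statement about $\greedyp_{S}(I)$ to the structural theory of $\greedyp$. Since $\greedyp_{S}$ differs from $\greedyp$ only in that it declines to select intervals of $N$, I would begin by recording that $P$ is exactly the path produced by $\greedyp$ on the reduced list $I\setminus N$. The point to check here is that an interval which is neglected (with respect to some endpoint) and never re-added is precisely an element of $N$, so that deleting $N$ up front yields the same sequence of greedy choices. Once this is in place, $P$ is a genuine $\greedyp$-path of $I\setminus N$, and the decomposition into covers and islands together with Proposition~\ref{proposition:covislanddecomp-props} applies verbatim in $G(I\setminus N)$.

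The core of the first claim is a one-line observation about the interval immediately after a cover. Fix $j\in\{1,\dots,k\}$ and let $f_{j}=\pstart(P_{j})$ be the first interval of the island $P_{j}$, so that $f_{j}$ is exactly the interval by which the algorithm extended the path when its endpoint was the cover $c_{j}$. By Proposition~\ref{proposition:covislanddecomp-props}(1) we have $\iend(f_{j})<\iend(c_{j})$, i.e. $f_{j}$ satisfies the right-endpoint condition in the definition of neglectability with respect to $c_{j}$. Since $\greedyp_{S}$ nevertheless chose $f_{j}$ instead of skipping it, $f_{j}$ cannot be neglectable, which forces $f_{j}\in S$. Hence $P_{j}\cap S\neq\emptyset$ for every $j\geq 1$; as every island is a connected component of $G(I\setminus N)-C(P)$, it is in particular an island with respect to $C(P)$, so each such $P_{j}$ is an $S$-island. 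The initial island $P_{0}$ has no preceding cover, so this argument does not reach it directly; this boundary case I would treat separately (by the analogous right-endpoint bookkeeping applied to the head of the path), and I flag it as a delicate point.

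For the strengthening I would move from $G(I\setminus N)$ back to the full graph $G(I)$. Fixing $P_{j}$ and the $S$-vertex $f_{j}$ found above, let $T_{j}$ be the connected component of $G(I)-C(P)$ containing $f_{j}$. Being a component of $G(I)-C(P)$, $T_{j}$ is automatically an island with respect to $C(P)$ in $G(I)$, and since it contains $f_{j}\in S$ it is an $S$-island. It then remains to prove the containment $T_{j}\subseteq P_{j}\cup N_{c_{j}}$. The guiding idea is that reinserting the neglected intervals cannot merge $P_{j}$ with another island or with the tail $I\setminus P$: an interval $\nu$ neglectable with respect to $c_{j}$ overlaps $c_{j}$ and has $\iend(\nu)<\iend(c_{j})$, so its right endpoint lies inside the span of $c_{j}$, and using that the list is right-endpoint sorted together with $L(P_{j})=\emptyset$ from Proposition~\ref{proposition:covislanddecomp-props}(2), one shows that every vertex reachable from $f_{j}$ in $G(I)$ without passing through $C(P)$ already lies in $P_{j}\cup N_{c_{j}}$.

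The main obstacle is exactly this last containment. The crisp part of the lemma, namely $f_{j}\in S$, is immediate once the cover structure is available, but controlling the adjacencies created by re-adding $N$ is the real work: I must rule out that a neglected interval acts as a bridge from $P_{j}$ either to a neglected interval belonging to a different cover or to $I\setminus P$. I would organise this around the cover $c_{j}$, showing that any such bridging interval would itself have been a non-neglectable greedy extension of $c_{j}$ with a right endpoint at most that of $f_{j}$, contradicting the way the decomposition of the $\greedyp$-path $P=\greedyp(I\setminus N)$ was formed.
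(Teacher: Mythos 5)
Your overall architecture coincides with the paper's: observe that $P=\greedyp(I\setminus N)$, apply the decomposition into covers and islands in $G(I\setminus N)$, and then control what happens when $N$ is reinserted. Your argument for the first claim is complete and in fact more explicit than the paper's (which dismisses it as a ``trivial consequence''): the interval $f_j$ immediately following the cover $c_j$ satisfies $\iend(f_j)<\iend(c_j)$ by Proposition~\ref{proposition:covislanddecomp-props}, so it would have been neglectable with respect to $c_j$ unless $f_j\in S$. The boundary case $P_0$ that you flag is genuinely easy: $\greedyp_S$ starts the path at the element of $S$ with smallest right endpoint among the intervals it processes, so that interval can never lie in $L$ of any prefix, is therefore not a cover, and sits in $P_0$.

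The second claim is where your proposal stops short of a proof, and you say so yourself (``the main obstacle is exactly this last containment''). The ingredient you are missing, and which the paper's proof supplies, is a \emph{two-sided} containment of each neglected interval inside its cover: for every $\nu\in N_{c_j}$ one has $\iend(\nu)<\iend(c_j)$ by the definition of neglectability (you record this), but also $\istart(\nu)>\istart(c_j)$, which the paper derives from the fact that $c_j$ is used by the algorithm before any interval of $N_{c_j}$ ever becomes a candidate extension. Together these give $\nu\subseteq c_j$ as intervals, so every $G(I)$-neighbour of $\nu$ already meets $c_j$, and this is what lets one conclude that $C(P)$ separates $N_{c_j}$ from everything outside $P_j\cup N_{c_j}$. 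Your sketch keeps only the right-endpoint inequality, which by itself does not prevent a neglected interval from reaching back across the left end of $c_j$ into $P_{j-1}$ or sideways into $I\setminus P$; without the left-endpoint bound, the ``any bridging interval would have been a non-neglectable extension of $c_j$'' argument you gesture at cannot be closed, since a bridging interval need not be adjacent to $c_j$'s predecessor at the moment the greedy made its choice. So the skeleton is right and the first half is fully proved, but the decisive geometric fact for the strengthening is absent.
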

\begin{proof}
    It is easy to see that this decomposition into covers and islands exists, 
    since if $P = \greedyp_{S}(I)$ it
    follows by construction that $P = \greedyp(I\setminus N)$.

    The fact that $P_{j}$ is an $S$\nobreakdash-island with respect to $C(P)$ in
    $G(I\setminus N)$ is a trivial consequence of of the decomposition into
    covers and islands.
    Since $c_{j}$ is used before every interval in $N_{c_{j}}$ we have that the
    left endpoint of every interval in $N_{c_{j}}$ is larger than the left
    endpoint of $c_{j}$. The right endpoints of each of those intervals is
    smaller than the right endpoint of $c_{j}$ by definition of neglected
    intervals. But this directly implies that $C(P)$ separates also $N_{c_{j}}$
    from the rest of $G(I)$, except for possibly $P_{j}$.
\end{proof}

Based on this we can obtain an easy procedure to solve the Steiner path cover problem
on interval graphs. We start with $\mathcal{P} = \emptyset$ and apply the algorithm $\greedyp_{S}$.
After termination let $P = \greedyp_{S}(I)$. We add $P$ to our partial solution
$\mathcal{P}$ and find the
smallest index $j$ such that $i_{j} \in S$ and $i_{j}$ is not in any path
currently contained in $\mathcal{P}$. Then we apply $\greedyp_{S}$ again to the list of intervals
$i_{j}, i_{j+1}, \dots, i_{n}$, until all intervals in $S$ are covered by one of
the paths in $\mathcal{P}$. The algorithm terminates with the Steiner path cover
$\mathcal{P}$ as its output.

\begin{theorem}\label{theorem:greedypathcover}
    The Steiner path cover obtained by iterated application of $\greedyp_{S}$ is
    optimal.
\end{theorem}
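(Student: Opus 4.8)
The plan is to show that the number $m$ of paths produced by the iterated $\greedyp_{S}$ both upper and lower bounds $\pi_{S}(G)$. The upper bound is immediate from the construction: the procedure only halts once every interval of $S$ lies on one of the collected paths $P^{(1)}, \dots, P^{(m)}$, and each $P^{(t)} = \greedyp_{S}(\cdot)$ is a genuine path in $G(I)$, so $\{P^{(1)}, \dots, P^{(m)}\}$ is a Steiner path cover and $\pi_{S}(G) \le m$. Everything therefore reduces to the matching lower bound $\pi_{S}(G) \ge m$, for which I would exhibit a single cutset making Proposition~\ref{proposition:pathcover} tight.

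Concretely, I would take $C := \bigcup_{t=1}^{m} C(P^{(t)})$, the union of all cover sets, and for each $t$ use the decomposition into covers and islands $P^{(t)} = (P_{0}^{(t)}, c_{1}^{(t)}, P_{1}^{(t)}, \dots, c_{k_{t}}^{(t)}, P_{k_{t}}^{(t)})$ supplied by Lemma~\ref{lemma:scovisland-decomp}. That lemma provides exactly what the count needs: each of the $k_{t}+1$ pieces $P_{j}^{(t)}$ meets $S$, and moreover $P_{j}^{(t)} \cup N_{c_{j}^{(t)}}$ contains an $S$\nobreakdash-island with respect to $C$ in the full graph $G(I)$, not merely in $G(I\setminus N)$. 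Summing over $t$ yields $\sum_{t=1}^{m}(k_{t}+1) = m + |C|$ such $S$\nobreakdash-islands. Since every $S$\nobreakdash-island with respect to $C$ is a union of connected components of $G(I)-C$ and contains a vertex of $S$, distinct (disjoint) $S$\nobreakdash-islands contribute distinct $S$-meeting components; hence $g_{S} \ge m + |C|$ and Proposition~\ref{proposition:pathcover} gives $\pi_{S}(G) \ge g_{S} - |C| \ge m$, closing the gap.

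The step I expect to be the real obstacle is verifying that these $m + |C|$ islands are genuinely pairwise disjoint and pairwise separated by $C$ in $G(I)$, so that nothing is double-counted and $|C| = \sum_{t} k_{t}$. Within a single path the separation is free, since each $P_{j}^{(t)}$ is an island with respect to its own cover set and hence also with respect to the larger $C$. The delicate cases are (i) consecutive paths, where I would argue from the termination rule of $\greedyp_{S}$—$P^{(t)}$ stops because $\pend(P^{(t)})$ has no admissible neighbour outside it, and $P^{(t+1)}$ begins at the smallest-indexed uncovered interval of $S$—to rule out any edge of $G(I)$ joining the last island of $P^{(t)}$ to the first island of $P^{(t+1)}$ other than through $C$, and to confirm the runs stay vertex-disjoint; and (ii) the neglected intervals, which live in $G(I)$ but on no path and could a priori reconnect two islands across the cutset. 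Point (ii) is precisely what the second, stronger conclusion of Lemma~\ref{lemma:scovisland-decomp} neutralizes: every interval of $N_{c_{j}^{(t)}}$ starts after and ends before $c_{j}^{(t)}$, so it is already separated from the rest of $G(I)$ by $C$ together with its host island, and folding the neglected intervals back in preserves both the $S$\nobreakdash-island property and the component count. Assembling these separations yields $g_{S} \ge m + |C|$ and completes the proof.
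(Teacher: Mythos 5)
Your proposal is correct and takes essentially the same route as the paper: both arguments take the union $C'=\bigcup_{t} C(P^{(t)})$ of all cover sets as the cutset, count $m+|C'|$ $S$\nobreakdash-islands via Lemma~\ref{lemma:scovisland-decomp}, and invoke Proposition~\ref{proposition:pathcover} to match the trivial upper bound. The ``delicate cases'' you flag---pairwise separation of islands across different paths and the role of the neglected intervals---are exactly the content the paper compresses into the phrase ``by repeated application of Lemma~\ref{lemma:scovisland-decomp}'', so your write-up is, if anything, more explicit than the paper's own proof.
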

\begin{proof}
    Let $P_{1}, P_{2}, \dots, P_{l}$ be the paths obtained by the given
    algorithm and $C' = \bigcup_{j=1}^{l} C(P_{j})$ be the union of all the covers in the decomposition into
    covers and islands of each path. Then, by repeated application of
    Lemma~\ref{lemma:scovisland-decomp} we obtain that there are $l+|C'|$
    $S$\nobreakdash-islands with respect to $C'$ in $G(I)$. By
    Proposition~\ref{proposition:pathcover} we then know that $\pi_{S}(G(I)) =
    l$, so our solution is an optimal Steiner path cover.
\end{proof}

To illustrate our algorithm for the Steiner path cover problem we again consider the
example in Figure~\ref{figure:GP}. In the case when $S= I$, i.e., all intervals
need to be covered, our algorithm runs $\greedyp_S(I)$ which outputs $P'=(i_1,
i_2, i_3, i_6, i_5)$, and then it runs $\greedyp_S(I\setminus P')$ which outputs
$P''=(i_7, i_8, i_9, i_{10}, i_{12}, i_{11})$, and the algorithm terminates.
Therefore, for $S=I$ we have that $\pi_S(I)=2$.
Now lets say that $S=\{i_2, i_4, i_6, i_8, i_{10}, i_{12}\}$. $\greedyp_S(I)$
starts with the element of $S$ with the smallest right endpoint which is $i_2$.
Then it extends the path with $i_3$, $i_4$ and then $i_6$. After that, the algorithm
neglects $i_5$ since $\iend(i_5) < \iend(i_6)$ and $i_5 \notin S$. Next, the path is
extended by $i_{10}$, then $i_7$ is neglected, but $i_8$ is added to the path
(since $i_8 \in S$). Then the path is extended by $i_9$ and finally by $i_{12}$.
Interval $i_{11}$ is neglected. The output of the algorithm is the path $P=(i_2, i_3,
i_4, i_6, i_{10}, i_8, i_9, i_{12})$, so $\pi_S(I)= 1$. Note that the key
factor that allowed us to cover the set $S$ with only one path is the fact that we
could neglect $i_5$.
\medskip

By using the Deferred-queue approach by Chang~et~al.~\cite{chang1999deferred}
this algorithm can be implemented in $O(n)$ time.

\section{The Steiner Cycle Problem}\label{sec:algsteinercyc}

To solve the Steiner cycle problem we first run our algorithm for the Steiner
cover problem (see Section~\ref{sec:algpathcover}). If $\pi_{S} > 1$ we know that there cannot exist a Steiner cycle.
Otherwise, let $P = (i_{1}, i_{2}, \dots, i_{l})$ be the obtained Steiner path in
$G(I)$.

Based on $P$ we construct two paths $Q$ and $R$. We start by setting $R = (i_{1})$ and $Q =
(i_{2})$. Then, we iteratively process the intervals $i_{3}$ to $i_{n}$.
If in the step of processing interval $i_{j}$ we have that  $\pend(Q) = i_{j-1}$,
we consider the following two cases. If $i_{j} \cap \pend(R) \neq \emptyset$, we
extend $R$ by $i_{j}$, i.e. $R = (R, i_{j})$. Otherwise, we extend $Q$ by
$i_{j}$, i.e. $Q = (Q, i_{j})$. If on the other hand in this step we have that $\pend(R) = i_{j-1}$
we check symmetrically if $i_{j} \cap \pend(Q) \neq \emptyset$. If
this is the case we extend $Q$ by $i_{j}$ and if not we extend $R$ by $i_{j}$.

If in the end of this process $\pend(Q) = i_{q}$ and $\pend(R) = i_{l}$, or
vice versa, we try to connect $Q$ and $\rev(R)$ to a Steiner cycle. To
achieve this we check if $\pend(Q)$ and $\pend(R)$ are directly connected, i.e.
$\pend(Q) \cap \pend(R)\neq \emptyset$, or if
there is an interval $i'$ among the intervals $I' \subseteq I$, whose right endpoints
$\iend(i') > i_{j}$ for all $j=1,2,\dots,l$ such that both $\pend(Q) \cap i'
\neq \emptyset$ and $\pend(R) \cap i' \neq \emptyset$. In any of those two cases we can
connect $Q$ and $\rev(R)$ to a Steiner cycle.
Otherwise, the algorithm returns that no Steiner cycle exists.

\begin{theorem}\label{theorem:greedycycle}
    The given algorithm correctly decides the existence of a Steiner cycle in
    $G(I)$ and obtains such a cycle if possible.
\end{theorem}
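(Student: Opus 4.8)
The plan is to split the theorem into a soundness half (every cycle the algorithm outputs is a genuine Steiner cycle) and a completeness half (whenever a Steiner cycle exists the algorithm finds one), and to organise the completeness half around the cover structure of the Steiner path so that Propositions~\ref{proposition:pathcover} and~\ref{proposition:steinercycle} carry most of the load, leaving only one genuinely new case. First I would record the easy reductions. If $\pi_{S} > 1$ then no Steiner cycle can exist, since deleting any edge of such a cycle yields a single path covering $S$, i.e. a Steiner path cover of size $1$, contradicting $\pi_{S}>1$; hence the initial test is correct. So assume $\pi_{S}=1$ and let $P=(i_{1},\dots,i_{l})$ be the obtained Steiner path, with decomposition into covers and islands $P=(P_{0},c_{1},P_{1},\dots,c_{k},P_{k})$ and $C(P)=\{c_{1},\dots,c_{k}\}$. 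I would then dispatch the case $C(P)\neq\emptyset$: by Lemma~\ref{lemma:scovisland-decomp} each block $P_{j}$ (after absorbing the relevant neglected intervals $N_{c_{j}}$) contains an $S$\nobreakdash-island with respect to $C(P)$ in $G(I)$, and distinct blocks lie in distinct components of $G(I)-C(P)$, so $C(P)$ is a genuine cutset with $g_{S}\ge k+1 > k = |C(P)|$, and Proposition~\ref{proposition:steinercycle} forbids a Steiner cycle. Thus the only case in which a Steiner cycle may exist is $\pi_{S}=1$ together with $C(P)=\emptyset$.

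For soundness I would verify the loop invariants of the two-path construction: after processing $i_{j}$ the lists $Q$ and $R$ partition $\{i_{1},\dots,i_{j}\}$, both remain valid paths, exactly one of them has its $\pend$ equal to the just-processed $i_{j}$, and $\pstart(R)=i_{1}$, $\pstart(Q)=i_{2}$. Legality of every extension is immediate: extending the lagging path happens only after the explicit intersection check, while extending the path ending at $i_{j-1}$ is legal because $i_{j-1}$ and $i_{j}$ are consecutive on $P$ and hence intersect. Consequently, if the final connection succeeds — either $\pend(Q)\cap\pend(R)\neq\emptyset$, or some interval $i'$ ending beyond the whole path meets both ends — then the closed walk consisting of $R$, the connector (the edge $\pend(R)\pend(Q)$, or the two edges through $i'$), and $\rev(Q)$, closed by the edge from $\pstart(Q)=i_{2}$ to $\pstart(R)=i_{1}$, is a simple cycle whose vertex set is $Q\cup R\cup\{i'\}=P\cup\{i'\}\supseteq S$, hence a Steiner cycle. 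Degenerate short paths (for instance $|S|\le 2$) I would treat directly.

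It then remains to prove completeness in the surviving case $\pi_{S}=1$, $C(P)=\emptyset$, and this is the main obstacle: I must show that the final ends $\pend(Q)$ and $\pend(R)$ can always be connected. I would argue the contrapositive in cut form. If the connection fails, then $\pend(Q)$ and $\pend(R)$ are disjoint, say $\iend(\pend(Q))<\istart(\pend(R))$, and no interval ending beyond the path bridges the gap between them; in the zig\nobreakdash-zag this means one strand has been stuck at its end while the other raced to $i_{l}$. I would collect the intervals that cross this bottleneck into a set $C$ and show that $C$ is a cutset separating the $S$\nobreakdash-vertices carried by the two strands into strictly more than $|C|$ $S$\nobreakdash-islands, so that Proposition~\ref{proposition:steinercycle} again forbids a Steiner cycle — contradicting the assumed existence of one. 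Combined with the soundness half and the $C(P)\neq\emptyset$ reduction, this yields the theorem: a cycle exists only if $C(P)=\emptyset$, and then the connection cannot fail, so by soundness the algorithm outputs a valid Steiner cycle.

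The delicate points, which I expect to carry the real weight of the proof, are twofold. First, extracting the bottleneck cutset \emph{precisely} from the greedy construction and relating the failure of the connection back to the vanishing of the covers: here I would lean on $L(P_{j})=\emptyset$ from Proposition~\ref{proposition:covislanddecomp-props} and on the fact that a stuck lagging end is exactly the signature of a cover $c_{j}$, so that a failed connection forces $C(P)\neq\emptyset$. Second, correctly accounting for the neglected intervals $N$ and the high intervals $I'$ when counting $S$\nobreakdash-islands, since these are precisely the intervals the construction is allowed to route around, and Lemma~\ref{lemma:scovisland-decomp} must be invoked in $G(I)$ rather than in $G(I\setminus N)$ to guarantee that the island count is not inflated or collapsed by them.
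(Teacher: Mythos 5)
Your soundness half and your dispatch of the case $C(P)\neq\emptyset$ are fine (the latter is a correct consequence of Lemma~\ref{lemma:scovisland-decomp} and Proposition~\ref{proposition:steinercycle}, even though the paper does not isolate it as a separate case). The genuine gap is in the completeness half: your plan rests on the claim that a failed connection forces $C(P)\neq\emptyset$, equivalently that when $C(P)=\emptyset$ the two ends $\pend(Q)$ and $\pend(R)$ can always be joined. This is false. The paper's own modified example (Figure~\ref{figure:cyclealg} with $i_{4}$ added to $S$) produces the Steiner path $P=(i_2,i_3,i_5,i_4,i_6,i_7,i_8)$, whose cover set $C(P)$ is empty because no interval of $P$ ends to the right of $\iend(i_8)$; yet the split yields $Q=(i_3,i_5,i_4)$ and $R=(i_2,i_6,i_7,i_8)$, the connection fails, and indeed no Steiner cycle exists. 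A stuck lagging end is the signature of a cover of the \emph{prefix} $P'=(i_1,\dots,i_h)$ ending at the stuck strand, not of the whole path $P$: covers are defined relative to the right endpoint of the \emph{last} vertex of the path in question, so $C(P')$ can be non-empty while $C(P)=\emptyset$ (in the example, $C(P')=\{i_5\}$).

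The paper's proof accordingly does not case on $C(P)$ at all. When the connection fails, it takes $h<l-1$ with $\pend(R)=i_h$ (or $\pend(Q)$, symmetrically), forms the cutset $C(P')\cup\{i_{h+1}\}$ --- non-empty even when $C(P')=\emptyset$ because of the extra vertex $i_{h+1}$ --- and shows that $\{i_{h+2},\dots,i_l\}\cup I'$ is one further $S$-island beyond the $|C(P')|+1$ islands supplied by Lemma~\ref{lemma:scovisland-decomp} applied to $P'$, so Proposition~\ref{proposition:steinercycle} applies with $g_S\geq|C(P')|+2>|C(P')|+1$. To repair your argument you would have to replace the condition ``$C(P)\neq\emptyset$'' by this prefix-plus-successor cutset; as written, the case you declare impossible ($C(P)=\emptyset$ and the connection fails) is exactly the case that carries the content of the theorem.
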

\begin{proof}
    If the algorithm finds a Steiner cycle this is obviously true. Also, by
    correctness of the algorithm for the Steiner path cover
    (Theorem~\ref{theorem:greedypathcover}), if no Steiner path
    is found we correctly determine that no Steiner cycle can exist.

    Otherwise, let us assume that the algorithm did not find a Steiner cycle. Without loss of generality, let $\pend(R) = i_{h}$ with $h <  l-1$ and
    consider the path $P' = (i_{1}, i_{2}, \dots, i_{h})$ and its decomposition
    into covers and islands.
    Since $R$ was not extended by any of the intervals $i_{h+2},
    i_{h+3} \dots, i_{n}$, we have that $C(P') \cup \{i_{h+1}\}$ separates the islands of $P'$ from
    $\{i_{h+2}, i_{h+3}, \dots, i_{l}\}$. In addition since $\pend(R)$ and
    $\pend(Q)$ could not be connected with any interval in $I'$ it holds for all
    interval $i' \in I'$ that $\istart(i') > \iend(i_{h})$. Combining this with
    point 2 of Proposition~\ref{proposition:covislanddecomp-props} we observe
    that $\{i_{h+2}, i_{h+3}, \dots, i_{l}\} \cup I'$ is non-empty and
    an $S$\nobreakdash-island with respect to $C(P') \cup \{i_{h+1}\}$.
    
    By Lemma~\ref{lemma:scovisland-decomp} there are at least $|C(P')|+1$
    $S$\nobreakdash-island with respect to $C \cup \{i_{h+1}\}$. So, by
    Proposition~\ref{proposition:steinercycle} there does not exist a Steiner
    cycle in $G(I)$.
\end{proof}

Given a Steiner path $P$, the paths $Q$ and $R$ can be easily constructed in
$O(n)$ time. This gives a linear time algorithm for the Steiner cycle problem in
interval graphs.
\medskip

Now we illustrate our algorithm for the Steiner cycle problem on interval graphs
with the example given in Figure~\ref{figure:cyclealg}.
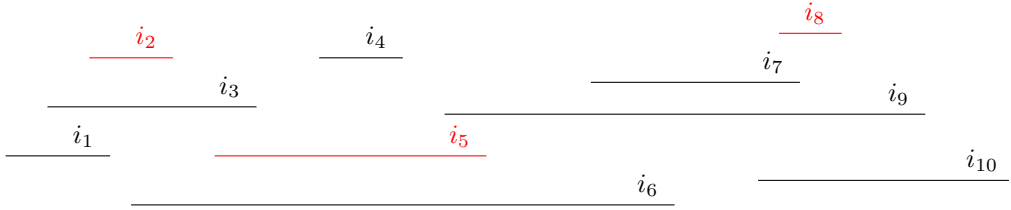
\begin{figure}[h]
	\centering
    \begin{tikzpicture}[xscale=0.55,yscale=0.65]
        \draw (1,1) --  (3.5,1) node[align=left, above]{$i_{1}$\qquad};
        \draw[red] (3,3) --  (5,3) node[align=left, above]{$i_{2}$\qquad};
        \draw (2,2) --  (7,2) node[align=left, above]{$i_{3}$\qquad};
        \draw (8.5,3) --  (10.5,3) node[align=left, above]{$i_{4}$\qquad};
        \draw[red] (6,1) --  (12.5,1) node[align=left, above]{$i_{5}$\qquad};
        \draw (4,0) --  (17,0) node[align=left, above]{$i_{6}$\qquad};
        \draw (15,2.5) --  (20,2.5) node[align=left, above]{$i_{7}$\qquad};
        \draw[red] (19.5,3.5) --  (21,3.5) node[align=left, above]{$i_{8}$\qquad};
        \draw (11.5,1.85) --  (23,1.85) node[align=left, above]{$i_{9}$\qquad};
        \draw (19,0.5) --  (25,0.5) node[align=left, above]{$i_{10}$\qquad};
    \end{tikzpicture}
	\caption{An instance of the Steiner cycle problem on an interval graph with $S=\{i_2, i_5, i_8\}$.}  
	\label{figure:cyclealg} 
\end{figure}
The given instance has 10 intervals $I=\{i_1, i_2, \ldots, i_{10}\}$ and $S=\{i_2, i_5, i_8\}$. 
Intervals in $S$ are represented with the red color. First we run
$\greedyp_S(I)$. It starts the path with $i_2$ and then extends it with $i_3$
and $i_5$ before neglecting $i_4$. Then it proceeds by extending the path with
$i_6$, $i_7$, finishing with $i_8$. Hence it obtains the Steiner path $P = (i_2,
i_3, i_5, i_6, i_7, i_8)$. In an attempt to create a Steiner cycle, we partition
$P$ into two paths $R$ and $Q$. We initialize them with the first two intervals in
$P$, that is, $R=(i_2)$ and $Q=(i_3)$. Now we consider $Q$ to be the current
path, and $R$ to be the previous path. In each step we consider the next
interval of $P$, and in the case that it intersect the end of the previous path,
we extend the previous path and make it the current path.
Otherwise we add the interval to the current path. So, interval $i_5$ is the
next interval in $P$, and it does not intersect $\pend(R)= i_2$, hence we add it
to $Q$, making it $Q=(i_3, i_5)$. The next interval is $i_6$, and it intersects
$\pend(R)= i_2$, hence we extend $R$ and make it the current path, so
$R=(i_2, i_6)$. Next interval $i_7$ does not intersect $\pend(Q)= i_5$ so we
extend $R$ again, making it $R=(i_2,i_6, i_7)$. Finally, interval $i_8$ does not
intersect $\pend(Q)= i_5$ so we extend $R$, making it $R=(i_2, i_6, i_7, i_8)$.
This ends our partition of $P$ with the resulting subpaths $R=(i_2, i_6, i_7,
i_8)$ and $Q=(i_3, i_5)$. Since $\pend(R)=i_8$ and $\pend(Q)=i_5$ do not
intersect, we cannot connect them into a cycle. The only remaining chance to do so is using an
interval from $I' = \{i \in I\setminus P\colon \iend(i) > \iend(\pend(P))\}=
\{i_9, i_{10}\}$. Luckily, $i_9$ intersect both $\pend(R)=i_8$ and
$\pend(Q)=i_5$, and can be used to connect $R$ and $Q$ into a cycle. The Steiner
cycle is then given by $(R, i_9, \rev(Q)) = (i_2, i_6, i_7, i_8,
i_9, i_5, i_3)$.

Now let us consider a modified instance of Figure~\ref{figure:cyclealg}, where
$i_4$ is also an element of $S$. Then  $\greedyp_S(I)$ would output the path
$P=(i_2, i_3, i_5, i_4, i_6, i_7, i_8)$, and the subsequent partition of $P$
would give $R=(i_2, i_6, i_7, i_8)$ and $Q=(i_3, i_5, i_4)$. But now there is no
interval in $I'$ that connects $\pend(R)=i_8$ and $\pend(Q)=i_4$, so our
algorithm outputs that there is no Steiner cycle. In order to verify that
there is no Steiner cycle we can follow the arguments in the proof of
Theorem~\ref{theorem:greedycycle}, which gives us a cutset $C=\{i_5, i_6\}$ that
separates $I$ into three  $S$-islands, and hence, by
Proposition~\ref{proposition:steinercycle}, guarantees that there is no Steiner
cycle.

\section{Streaming Algorithms -- The Problem of Limited Screen
Size}\label{sec:streaming}

An important question when considering solvability of game levels is which parts
of a level have to be visible to the user at any time for them to
deterministically know how to play correctly. To answer this question for our toy
model, we study the algorithms from Section~\ref{sec:algpathcover} and
\ref{sec:algsteinercyc} as streaming algorithms. We assume that the input stream
is presented as a sequence of right endpoint sorted intervals which can only be
examined in one pass. As its output the streaming algorithm has to write the list
of intervals giving the paths or cycle.

First, consider the algorithm $\greedyp_{S}$. In each step this algorithm needs access to the
next interval on the stream that is connected with the current path $\pend(P)$. If the next
interval $i$ on the stream is not connected to $\pend(P)$ there can be two reasons.
This interval could either be in a new different connected component than $P$,
or it could be connected to $P$ via another interval $i'$ with $\iend(i') > \iend(i)$.
Intervals of this kind are all completely contained in $i'$. After processing
and storing all
such intervals we clearly know whether the graph is disconnected or the path $P$ can
be extended and we can further process the stored intervals.
This motivates the introduction of the parameter $\kappa(I)$,
the maximum number of intervals contained in another. Based on this parameter we observe
that $\greedyp_{S}$ can be implemented as a single pass streaming algorithm with
$O(\kappa(I))$ additional storage. Based on this we obtain the following result.

\begin{theorem}\label{thm:streamingcover}
    Given $\kappa(I)$ the Steiner path cover problem on interval graphs can be solved
    by a single pass streaming algorithm in $O(n)$ time with $O(\kappa(I))$
    additional storage.
\end{theorem}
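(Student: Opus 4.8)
The plan is to prove Theorem~\ref{thm:streamingcover} by arguing that the iterated $\greedyp_{S}$ procedure of Section~\ref{sec:algpathcover} can be realized in a single pass over the right-endpoint-sorted stream, while maintaining only $O(\kappa(I))$ intervals in memory at any time. Since Theorem~\ref{theorem:greedypathcover} already establishes that iterated $\greedyp_{S}$ produces an optimal Steiner path cover, the only thing left to verify is that the algorithm needs to store at most $O(\kappa(I))$ intervals beyond the output it is writing, and that each stream element is touched $O(1)$ times so the total running time stays $O(n)$.

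First I would isolate the single situation in which the streaming algorithm cannot immediately act: when the current interval $i$ on the stream is not adjacent to $\pend(P)$. As observed in the paragraph preceding the theorem, there are exactly two explanations, and they are distinguishable only after scanning further. If $i$ lies in a different connected component, the path $P$ must be terminated; if instead $i$ is joined to $P$ through a later interval $i'$ with $\iend(i') > \iend(i)$, then $P$ can still be extended once $i'$ is read. The key geometric observation is that every interval we are forced to buffer in this uncertain phase — each such $i$ — is completely contained in some later interval $i'$ (namely the one that will eventually reconnect, or would have). I would make this precise: the set of intervals held in the buffer at any moment is a set of intervals pairwise nested inside a common containing interval that has not yet appeared on the stream, and hence its size is bounded by $\kappa(I)$, the maximum number of intervals contained in another single interval. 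This is the crux of the memory bound.

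Next I would handle the bookkeeping across the multiple invocations of $\greedyp_{S}$. The outer loop restarts $\greedyp_{S}$ from the smallest unused index $j$ with $i_{j}\in S$. In a streaming setting we cannot rewind, so I would argue that a restart is never needed in practice: whenever a path is terminated, the next path simply begins with the next buffered-or-streamed interval of $S$ that is not yet covered, and all relevant candidate intervals are already resident in the $O(\kappa(I))$-sized buffer. Thus the whole cover is produced in a single forward pass, with the buffer being flushed to the output as soon as the ambiguity at each step is resolved. Because each interval enters the buffer once, is compared against $\pend(P)$ a constant number of times, and is either appended to a path or discarded as neglectable, the amortized work per interval is $O(1)$, giving total time $O(n)$.

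The main obstacle I anticipate is making the containment argument for the buffer airtight, in particular showing that the buffered intervals are always nested within one not-yet-seen interval so that their count cannot exceed $\kappa(I)$. One must rule out the possibility that intervals accumulate across what turn out to be distinct connected components or across a path boundary. I would resolve this by noting that the right-endpoint ordering guarantees any buffered interval has a smaller right endpoint than the pending containing interval, and that the moment we read an interval whose left endpoint exceeds the right endpoint of $\pend(P)$ with no intervening connector, we may safely declare a component (or path) boundary and clear the buffer. Handling the interaction between neglectable intervals (which are discarded rather than buffered long-term) and genuinely buffered intervals requires care, but neglectable intervals are precisely those already dominated by $\pend(P)$, so they never inflate the buffer beyond the $\kappa(I)$ bound.
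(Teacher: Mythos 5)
Your overall approach is the same as the paper's: buffer every streamed interval that is not adjacent to $\pend(P)$, observe that any later interval $i'$ that does reconnect to $\pend(P)$ must satisfy $\istart(i') < \istart(i)$ and $\iend(i') \geq \iend(i)$ for each buffered $i$ (by the right-endpoint sorting), so all buffered intervals are contained in $i'$ and the buffer never exceeds $\kappa(I)$; correctness of the cover itself is inherited from Theorem~\ref{theorem:greedypathcover}. Your treatment of restarts (starting the next path from the buffered intervals rather than rewinding) is consistent with what the paper leaves implicit. One wording issue: the buffered intervals are not ``pairwise nested''; they are merely all contained in a common, not-yet-seen connector, which is exactly what the definition of $\kappa(I)$ bounds.

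There is, however, one step that would fail as written: your rule for declaring a component or path boundary, namely that ``the moment we read an interval whose left endpoint exceeds the right endpoint of $\pend(P)$ with no intervening connector, we may safely declare a component (or path) boundary and clear the buffer.'' Reading such an interval is precisely the event that \emph{starts} the buffering, not one that resolves it: with $\pend(P) = [0,1]$ the stream may continue with $[2,3]$ and only afterwards with $[0.5,4]$, which contains $[2,3]$ and reconnects to $P$; clearing the buffer upon seeing $[2,3]$ would wrongly terminate the path. The correct stopping rule --- and the place where the hypothesis ``given $\kappa(I)$'' is actually used --- is the one in the paper's remark following the theorem: keep buffering until either a connector appears or $\kappa(I)$ intervals have been stored; in the latter case no future interval can contain them all, so $P$ cannot be extended and the boundary may safely be declared. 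Without knowledge of $\kappa(I)$ (or the assumption that $G(I)$ is connected) no early decision of the kind you propose is possible, so you should replace your criterion with this threshold test; the rest of your argument then goes through.
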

\begin{remark}
    If $\kappa(I)$ is not known to the algorithm the same result only holds assuming $G(I)$ is
    connected. Otherwise in the case of a disconnected interval graph the
    algorithm can not decide after $O(\kappa(I))$ steps that the graph is
    disconnected. It has to continue to store the intervals from the
    stream till the end, because there is no way of knowing if a future interval will be
    connected to $\pend(P)$ for the current path $P$.

    On the other hand if $\kappa(I)$ is known we can stop this process after
    storing $\kappa(I)$ intervals since we know that no more of them can be
    contained in another interval and terminate with the current path $P$.
\end{remark}

To solve the Steiner cycle problem, a single pass streaming algorithm can no longer first run
$\greedyp_{S}$ and then construct the two paths $Q$ and $R$, since this would need
two passes. Also the output of the cycle is only possible in a single pass,
without a large amount of additional memory, if the two paths $Q$ and $R$ are
accepted as an output instead of the list for the Steiner cycle.
In the application to platform games this is not a
problem since here a player actually is doing first a pass from the left to the
right and then another pass from the right to the left. So correct construction
of $Q$ during the first pass is enough to guarantee the possibility of
getting back to the exit later. This path for the way back can then be easily found
doing a simple greedy approach (see the description in the end of the current
section).

The construction of $Q$ and $R$ can be incorporated into the streaming variant
of $\greedyp_{S}$ described above without the need for additional memory. In
addition to $\pend(P)$ we also store $\pend(Q)$ and $\pend(R)$. This way 
in each step of the algorithm we can decide whether the next interval extending $P$
should be appended to $Q$ or $R$, by the same method as explained in
Section~\ref{sec:algsteinercyc}. This only needs additional memory for
storing both $\pend(Q)$ and $\pend(R)$ compared to just executing
$\greedyp_{S}$.

\begin{theorem}
    Given $\kappa(I)$ the Steiner cycle problem on interval graphs can be solved
    by a single pass streaming algorithm in $O(n)$ time with $O(\kappa(I))$
    additional storage.
\end{theorem}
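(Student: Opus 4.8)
The plan is to show that the entire offline procedure of Section~\ref{sec:algsteinercyc} — running $\greedyp_{S}$, splitting the resulting Steiner path $P$ into the two paths $Q$ and $R$, and performing the final connection test — can be carried out in a single left-to-right pass over the right-endpoint sorted stream, by reusing the streaming implementation of $\greedyp_{S}$ guaranteed by Theorem~\ref{thm:streamingcover}. Since that theorem already provides a single-pass realization of $\greedyp_{S}$ with $O(\kappa(I))$ additional storage, and since Theorem~\ref{theorem:greedycycle} establishes correctness of the offline algorithm, it suffices to argue that (i) the construction of $Q$ and $R$ and (ii) the connection test can both be interleaved with this pass, without a second pass and without increasing the storage bound beyond $O(\kappa(I))$, while producing exactly the same $P$, $Q$, $R$, and connection decision as the offline algorithm.

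First I would run the streaming version of $\greedyp_{S}$ verbatim; if it reports $\pi_{S}(G(I)) > 1$, the algorithm immediately outputs that no Steiner cycle exists, which is correct by Theorem~\ref{theorem:greedycycle}. Otherwise, in the same pass I would maintain, in addition to $\pend(P)$, the two endpoints $\pend(Q)$ and $\pend(R)$. As observed in the paragraph preceding the theorem, whenever $\greedyp_{S}$ commits to the next interval that extends $P$, the decision of whether to append it to $Q$ or to $R$ depends only on whether it intersects the end of the currently passive path; this information is available from the two stored endpoints alone, so the split costs only $O(1)$ extra state. Finally, once $P$ is complete, I would perform the connection test by checking whether $\pend(Q) \cap \pend(R) \neq \emptyset$ directly, and otherwise testing each candidate interval $i' \in I'$ — those with $\iend(i') > \iend(\pend(P))$ — against the two fixed endpoints, accepting as soon as one interval intersects both. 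If such an $i'$ is found the algorithm outputs a cycle; otherwise it reports that no Steiner cycle exists.

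For the resource bounds, the streaming $\greedyp_{S}$ contributes $O(\kappa(I))$ storage and $O(n)$ time by Theorem~\ref{thm:streamingcover}, while the $Q$/$R$ split adds only the two endpoints $\pend(Q)$ and $\pend(R)$. To stay within $O(\kappa(I))$ storage during output I would emit $Q$ and $R$ (together with the connecting interval) as the answer, as discussed before the theorem, rather than materialising the reversed cycle $(R, i', \rev(Q))$, which would require buffering $Q$. The main obstacle is the connection test itself: the candidates $i' \in I'$ have right endpoints exceeding $\iend(\pend(P))$ and therefore arrive only at the tail of the right-endpoint sorted stream, after $P$ has already terminated, so they cannot be located by re-reading $P$. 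The point to verify is that they need not be retained — each such $i'$ necessarily contains the point $\iend(\pend(P))$ and is simply tested against the two already-fixed endpoints $\pend(Q)$ and $\pend(R)$ and then discarded — so the connection phase runs in $O(1)$ additional memory and $O(n)$ time, and the overall $O(\kappa(I))$ storage bound is inherited entirely from the $\greedyp_{S}$ phase. Correctness of the decision and of the emitted cycle then follows directly from Theorem~\ref{theorem:greedycycle}, since the streamed computation reproduces the same $P$, the same partition into $Q$ and $R$, and the same connection test as the offline algorithm.
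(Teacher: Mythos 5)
Your proposal is correct and follows essentially the same route as the paper: reuse the streaming implementation of $\greedyp_{S}$ from Theorem~\ref{thm:streamingcover}, additionally store only $\pend(Q)$ and $\pend(R)$ to interleave the $Q$/$R$ split with the single pass, emit $Q$ and $R$ (rather than the materialised cycle) as output, and handle the connection test against the tail intervals $I'$ in $O(1)$ extra memory. You in fact spell out the final connection phase more explicitly than the paper does, but the underlying algorithm and resource accounting are the same.
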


It is important to note that from the view of a player the additional storage in
the streaming algorithms does not correspond to storage needed to decide the
next step of the game but to the range of the level that has to be visible to
the player. It covers the fact that the player has to be able to see at least
the next two intervals reachable from its current position and all the intervals
before that in a right endpoint sorted order.
The two things a player needs to remember at each point of the game are
 $\pend(Q)$ the platform it is currently on and $\pend(R)$. The algorithm can
 also be simplified in the following way.
 
 Assume the player is currently located on the interval $\pend(Q)$.
 There are two possible cases. In the first case the last step was
 jumping onto $\pend(Q)$. Let $i$ be the interval reachable from
 $\pend(Q)$ with $\iend(i)$ minimum, such that $i$ is not neglectable with
 respect to $\pend(Q)$. If $i \cap \pend(R) \neq \emptyset$ we extend $R$, so
 the player remembers $\pend(R) = i$. Otherwise the player jumps to $i$, so $\pend(Q) = i$.
 If neither is possible the current level is unsolvable.
 In the second case the last step was an extension of $R$, so $\pend(R)$ was
 updated. Let $i$ be the interval reachable from $\pend(R)$ with $\iend(i)$
 minimum, such that $i$ is not neglectable with respect to $\pend(R)$. If $i \cap
 \pend(Q) \neq \emptyset$ the player jumps to $i$, so $\pend(Q) = i$. Otherwise
 we extend $R$ so the player remembers that $\pend(R) = i$. If neither is
 possible the current level is also unsolvable.
 If the last interval in $S$ is either visited by the player, i.e. is equal to
 $\pend(Q)$ or reached by $R$, i.e. is equal to $\pend(R)$ the player tries to
 reach $\pend(R)$ from $\pend(Q)$ by jumping there directly or using an interval
 $i'$ with $\iend(i') > \max\{\iend(\pend(Q)), \iend(\pend(R)\}$. If this is not
 possible the player determines that the current level is unsolvable. Otherwise
 it can easily get back to the exit visiting all the unvisited intervals in $S$
 by reconstructing a maybe permuted version of the path $\rev(R)$. Let $i$ be
 the interval the player is currently on. In each step
 it can greedily jump to the reachable interval $i'$ with maximum left endpoint
 $\istart(i')$, such that $i'$ is not neglectable with respect to $i$ in the
 reverse sense. This means we can neglect jumping to $i'$ if $\istart(i) <
 \istart(i')$ and $i' \notin S$. This is just an application of $\greedyp_{S}$
 in reverse direction. Since the path $R$ exists, by the optimality of
 $\greedyp_{S}$ for the path cover problem, using this strategy the player
 finds a path $R'$ covering all intervals in $S$ and returning to the start of
 the level.

\section{Conclusion}

We obtained linear time algorithms for both the Steiner path cover problem and
the Steiner cycle problem, assuming the intervals are given as a right endpoint
sorted list. We also analyzed those algorithms as single pass
streaming algorithms to study solvability of a simplified model for platform
game levels.

Our simplification reduced those levels to a one-dimensional
interval graph model. The hamiltonian cycle and path problems for
two-dimensional generalizations of interval graphs are known to be NP-hard. It
would be of interest to study special cases of these problems inspired from game
levels. Furthermore the analysis of streaming algorithms for interval graphs is
a natural extension to classic algorithms for interval graphs. Understanding
other efficient algorithms for different problems on interval graphs in this
model is a very interesting area for further research.



\bibliography{steiner-interval}

\end{document}